\newtheorem{theorem}{Theorem}[section]
\newtheorem{proposition}[theorem]{Proposition}
\newtheorem{lemma}[theorem]{Lemma}
\newtheorem{corollary}[theorem]{Corollary}
\newcommand{\tr}{{\rm Tr\hskip -0.2em}~}
   \newcommand{\df}[2]{\frac{d#1}{d#2}}
\begin{document}

\title{Geometric properties for a class of deformed trace functions}
\author{Frank Hansen}
\date{}

\maketitle

\begin{abstract} We investigate geometric properties for a class of trace functions expressed in terms of the deformed logarithmic and exponential functions. We extend earlier results of Epstein, Hiai, Carlen and Lieb.
\\[1ex]them 
{\bf MSC2010} classification: 47A63; 15A90\\[1ex]
{\bf{Key words and phrases:}}  trace function, deformed trace function, convexity, entropy.
\end{abstract}

\section{Preliminaries}

Tsallis~\cite{tsallis:2009} generalised in 1988 the standard Bolzmann-Gibbs entropy to a non-extensive quantity $ S_p(\rho) $ depending on a parameter $ p. $ In the quantum version it is given by
\[
S_p(\rho)=\frac{1-\tr\rho^p}{p-1}\qquad p\ne 1,
\]
where $ \rho $ is a density matrix. It has the property that
$
S_p(\rho)\to S(\rho)
$
for $ p\to 1, $ where $ S(\rho)=-\tr\rho\log\rho $ is the von Neumann entropy. 

\subsection{The deformed logarithm and exponential}
The Tsallis entropy may be written on a similar form
\[
S_p(\rho)=-\tr\rho\log_p \rho,
\]
where the deformed logarithm $ \log_p $ defined for positive $ x $ is given by
\[
\log_p x=\int_1^x  t^{p-2}\,dt = \left\{\begin{array}{ll}
                                                         \displaystyle\frac{x^{p-1}-1}{p-1}\quad &p\ne 1\\[2.5ex]
                                                         \log x                                 &p=1.
                                                         \end{array}\right.
\]
The deformed logarithm is also denoted the $ p $-logarithm. The range of the $ p $-logarithm is given by the intervals
\[
\begin{array}{ll}
 \bigl(-(p-1)^{-1}, \infty\bigr)\quad&\text{for}\quad p>1\\[1.5ex]
  \bigl(-\infty, -(p-1)^{-1}\bigr)&\text{for}\quad p<1\\[1.5ex]
  \bigl(-\infty,\infty\bigr)&\text{for}\quad p=1.
 \end{array}
 \]
 The inverse function $ \exp_p $ (denoted the $ p $-exponential) is always positive and given by 
\[
\exp_p(x)=
\left\{\begin{array}{ll}
(x(p-1)+1)^{1/(p-1)}\quad&\text{for $ p>1 $ and $x>-(p-1)^{-1} $}\\[1.5ex]
(x(p-1)+1)^{1/(p-1)}\quad&\text{for $ p<1 $ and $x<-(p-1)^{-1} $}\\[1.5ex]
\exp x \quad&\text{for $ p=1$ and  $x\in\mathbf R. $ }
\end{array}\right.
\]
The $ p $-logarithm and the $ p $-exponential functions converge, respectively, to the logarithmic and the exponential functions for $ p\to 1. $
We note that
\begin{equation}\label{derivative of $ q-exponential}
\df{}{x}\log_p(x)=x^{p-2}\qquad\text{and}\qquad \df{}{x}\exp_p(x)=\exp_p(x)^{2-p}\,.
\end{equation}
We will also need the following lemma.

\begin{lemma}\label{lemma: exponent of deformed exponential}
Take arbitrary $ p\in\mathbf R. $ Independent of $ x> 0, $ we have
\[
\log_p x^q=q \log_\alpha x,
\]
where $ \alpha=1+q(p-1). $ Furthermore, take arbitrary $ q\ne 0 $ and set   $ \beta=1+(p-1)/q\,. $ For any $ x\in\mathbf R $ in the domain of $ \exp_q\,, $ we obtain that $ qx $ is in the domain of $ \exp_\beta $ and that
\[
(\exp_p x)^q=\exp_\beta(q x).
\]
\end{lemma}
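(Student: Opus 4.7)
The plan is to prove both identities by direct substitution into the closed-form expressions for $\log_p$ and $\exp_p$ given earlier, with a brief case split to handle the degenerate parameter values. Nothing more sophisticated is needed.

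For the first identity, I would first handle $p=1$, where $\alpha=1$ as well and the claim reduces to $\log x^q=q\log x$. For $p\ne 1$, I would substitute $y=x^q$ into $\log_p y=(y^{p-1}-1)/(p-1)$ to obtain
\[
\log_p x^q=\frac{x^{q(p-1)}-1}{p-1}.
\]
Since $\alpha-1=q(p-1)$, the subcase $q=0$ makes both sides vanish; for $q\ne 0$ I would multiply numerator and denominator by $q$ to rewrite the right-hand side as $q\cdot(x^{\alpha-1}-1)/(\alpha-1)=q\log_\alpha x$.

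For the second identity I would begin with the domain claim. Writing $\beta-1=(p-1)/q$, the defining inequality $qx(\beta-1)+1>0$ for $qx$ to lie in the domain of $\exp_\beta$ simplifies to $x(p-1)+1>0$, which is exactly the condition for $x$ to lie in the domain of $\exp_p$. (I read ``$\exp_q$'' in the statement as a misprint for ``$\exp_p$'', since $\exp_p x$ is the quantity being raised to the $q$th power.) This identification is automatic and requires no case split on the sign of $q$. Once the domain is settled, the case $p=1$ forces $\beta=1$ and reduces the claim to $(e^x)^q=e^{qx}$; for $p\ne 1$ I would simply compute
\[
(\exp_p x)^q=\bigl(x(p-1)+1\bigr)^{q/(p-1)}=\bigl(qx(\beta-1)+1\bigr)^{1/(\beta-1)}=\exp_\beta(qx).
\]

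The only potential obstacle is the bookkeeping of the degenerate cases $p=1$, $q=0$, and $\alpha=1$ or $\beta=1$; each of them collapses to a triviality, so the main work is simply to list them and dispatch them. The domain verification looks like it might require a sign analysis, but the identity $qx(\beta-1)=x(p-1)$ makes this unnecessary.
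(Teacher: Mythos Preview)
Your proof is correct, and it takes a somewhat different route from the paper's. For the first identity the paper works from the integral definition $\log_p x^q=\int_1^{x^q}t^{p-2}\,dt$ and performs the substitution $u=t^{1/q}$ to arrive at $q\int_1^x u^{\alpha-2}\,du$; you instead plug directly into the closed form $(y^{p-1}-1)/(p-1)$ and do one line of algebra. For the second identity the paper invokes the first: it computes $\log_\beta\bigl((\exp_p x)^q\bigr)=q\log_p(\exp_p x)=qx$ and then applies $\exp_\beta$ to both sides, after asserting (via a four-case sign analysis on $p-1$ and $q$) that $qx$ lies in the domain of $\exp_\beta$. You bypass both the bootstrap and the case split by observing the single identity $qx(\beta-1)=x(p-1)$, which simultaneously settles the domain question and yields $(\exp_p x)^q=\exp_\beta(qx)$ by direct substitution into the closed form. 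Your approach is the more elementary of the two and makes the degenerate cases transparent; the paper's integral argument has the minor advantage of treating $p=1$ uniformly (no separate case needed), and its derivation of the exponential identity from the logarithmic one is conceptually pleasing even if it requires the sign bookkeeping you avoid. Your reading of ``$\exp_q$'' as a misprint for ``$\exp_p$'' is correct.
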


\begin{proof}
We substitute $ u=t^{1/q} $ (thus $ t=u^q) $ in
\[
\log_p x^q=\int_1^{x^q} t^{p-2}\,dt
\]
and note that $ du=q^{-1} t^{(1-q)/q}\, dt. $ Therefore, $ dt=q t^{(q-1)/q}\, du $ and thus
\[
\log_p x^q=\int_1^x u^{q(p-2)} q u^{q-1}\,du=q\int_1^x u^{q(p-1)-1}\, du.
\]
Since $ q(p-1)-1=q(p-1)+1-2=\alpha-2, $ the first statement follows. The definition of $ \beta $ implies $ q/(p-1)=1/(\beta-1). $ There are now four cases depending on $ q>0, $ $ q<0, $ $ p>1 $ and $ p<1. $ In all four cases it follows that $ qx $ is in the domain of $ \exp_\beta. $ We finally obtain from the first result in the lemma that
\[
\log_\beta(\exp_p x)^q=q\log_p(\exp_p x) =qx
\]
and therefore $ (\exp_p x)^q = \exp_\beta(qx). $
\end{proof}

\subsection{Convexity and min-max theorems}

An important tool in our investigation is taken from convex analysis. These techniques are used in engineering, automatic control, signal processing, resource allocation, portfolio theory, and numerous other fields. We are in particular using that partial minimisation of a convex function is convex \cite[Section 3.2.5]{Boyd2004}.
This technique was successfully applied by Carlen and Lieb in the investigation of trace functions \cite{carlen:2008}. We provide the proof as a convenience to the reader.

\begin{lemma} Let $ f\colon X\times Y\to \mathbf R $ be a function of two variable and set 
\[
g(y)=\inf_{x\in X} f(x,y)\quad\text{and}\quad h(y)=\sup_{x\in X} f(x,y)
\]
for $ x\in X. $

\begin{enumerate}[(i)]

\item If $ f(x,y) $ is jointly convex, then $ g $ is convex.

\item If $ f(x,y) $ is convex in the second variable, then $ h $ is convex.

\item If $ f(x,y) $ is concave in the second variable, then $ g $ is concave.

\item If $ f(x,y) $ is jointly concave, then $ h $ is concave.

\end{enumerate}

\end{lemma}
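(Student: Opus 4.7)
The plan is to handle the four parts in two pairs: the sup-of-convex/inf-of-concave statements (ii) and (iii), which follow immediately from the classical fact that a pointwise supremum of convex functions is convex; and the partial-minimization/maximization statements (i) and (iv), which require an $\varepsilon$-argument because the infimum or supremum need not be attained.

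For (ii), I would fix $y_1,y_2$ and $\lambda\in[0,1]$ and observe that, for each $x\in X$, convexity in the second variable gives
\[
f(x,\lambda y_1+(1-\lambda)y_2)\le \lambda f(x,y_1)+(1-\lambda)f(x,y_2)\le \lambda h(y_1)+(1-\lambda)h(y_2).
\]
Taking the supremum over $x$ on the left yields convexity of $h$. Part (iii) is obtained by applying (ii) to $-f$, or by running the same calculation with reversed inequalities and infimum in place of supremum.

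For (i), I would again fix $y_1,y_2$ and $\lambda\in[0,1]$. Given $\varepsilon>0$, choose $x_1,x_2\in X$ with $f(x_i,y_i)\le g(y_i)+\varepsilon$ (here one must allow for the possibility that the infimum is not attained, which is the one technical nuisance). Joint convexity of $f$ then gives
\[
g\bigl(\lambda y_1+(1-\lambda)y_2\bigr)\le f\bigl(\lambda x_1+(1-\lambda)x_2,\ \lambda y_1+(1-\lambda)y_2\bigr)\le \lambda g(y_1)+(1-\lambda)g(y_2)+\varepsilon,
\]
and letting $\varepsilon\to 0$ finishes the proof. Part (iv) follows by applying (i) to $-f$.

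The main obstacle, such as it is, lies in (i): one must be careful that $X$ need not be a topological space and the infimum need not be attained, so the $\varepsilon$-argument (rather than a direct choice of minimizers) is essential. I would also tacitly assume that $g$ takes values in $\mathbf R$ rather than $\{-\infty\}$, since otherwise ``convexity'' requires the usual extended-real-valued convention; if the paper intends $g,h$ to be real-valued, this is automatic from the context where the lemma is applied.
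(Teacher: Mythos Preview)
Your proposal is correct and follows essentially the same route as the paper: an $\varepsilon$-argument for the partial-minimization case (i), convexity in the second variable for (ii), and then (iii), (iv) by passing to $-f$. The only cosmetic difference is that for (ii) you take the supremum directly (using that a pointwise supremum of convex functions is convex), whereas the paper also phrases (ii) via an $\varepsilon$-choice of a near-maximizer; your version is slightly cleaner but the content is identical.
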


\begin{proof}
Take $ \varepsilon>0 $ and elements $ y_1,y_2\in Y. $ Pick $ x_1,x_2\in X $ such that
\[
g(y_1)\ge f(x_1,y_1)+\varepsilon\quad\text{and}\quad g(y_2)\ge f(x_2,y_2)+\varepsilon.
\]
Then
\[
\begin{array}{l}
g(\lambda y_1+(1-\lambda) y_2)\le f(\lambda x_1+(1-\lambda) x_2, \lambda y_1+(1-\lambda) y_2)\\[1.5ex]
\le\lambda f(x_1,y_1)+(1-\lambda) f(x_2,y_2)\le \lambda g(y_1)+(1-\lambda) g(y_2) -\varepsilon,
\end{array}
\]
so $ g $ is convex. Pick to $ \lambda y_1+(1-\lambda)y_2\in Y $ and $ \varepsilon >0 $ an element $ z\in X $ such that
\[
h(\lambda y_1+(1-\lambda) y_2) \le f(x,\lambda y_1+(1-\lambda) y_2)+\varepsilon.
\]
Then
\[
\begin{array}{l}
h(\lambda y_1+(1-\lambda) y_2)\le \lambda f(x,y_1)+(1-\lambda) f(x,y_2)+\varepsilon\\[1.5ex]
\le \lambda h(y_1)+(1-\lambda)h(y_2)+\varepsilon,
\end{array}
\]
so $ h $ is convex. The cases $ (iii) $ and $ (iv) $ follow by considering $ -f(x,y). $
\end{proof}

\subsection{The Young tracial inequalities} 

The following inequalities are known as the tracial Young inequalities. We prefer to prove them as below.

\begin{proposition}\label{basic tracial inequalities}
Let $ A $ and $ B $ be positive definite matrices. Then
\[
\tr A^p B^{1-p} \le p\tr A+(1-p)\tr B\qquad 0\le p\le 1
\]
and
\[
\tr A^p B^{1-p} \ge p\tr A+(1-p)\tr B\qquad p\le 0, \,  p\ge 1.
\]
\end{proposition}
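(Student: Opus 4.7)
The plan is to reduce the tracial inequality to the scalar Young inequality through a simultaneous spectral expansion of $ A $ and $ B. $ First I would fix orthonormal eigenbases $ \{e_i\} $ of $ A $ and $ \{f_j\} $ of $ B $ with positive eigenvalues $ a_i $ and $ b_j. $ Since $ A^p $ is self-adjoint with eigenvectors $ e_i $ and eigenvalues $ a_i^p, $ evaluating the trace in the basis $ \{e_i\} $ gives
\[
\tr A^p B^{1-p}=\sum_i a_i^p\langle e_i,B^{1-p}e_i\rangle=\sum_{i,j} a_i^p b_j^{1-p}\, c_{ij},\qquad c_{ij}:=|\langle e_i,f_j\rangle|^2.
\]
The weight matrix $ (c_{ij}) $ is doubly stochastic, since its entries are nonnegative and Parseval applied to each of the two orthonormal bases yields $ \sum_j c_{ij}=\sum_i c_{ij}=1. $ In particular, $ \tr A=\sum_i a_i $ and $ \tr B=\sum_j b_j. $

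Next I would apply the tangent-line estimate for the power function at $ t=1. $ The function $ t\mapsto t^p $ is concave on $ (0,\infty) $ when $ 0\le p\le 1 $ and convex when $ p\le 0 $ or $ p\ge 1, $ so $ t^p\le 1+p(t-1) $ in the first regime and $ t^p\ge 1+p(t-1) $ in the second. Setting $ t=a/b $ for positive reals $ a,b $ and multiplying through by $ b $ gives the scalar Young inequality
\[
a^p b^{1-p}\le pa+(1-p)b\qquad (0\le p\le 1),
\]
together with the reversed inequality in the range $ p\le 0 $ or $ p\ge 1. $

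Finally I would apply this scalar inequality with $ (a,b)=(a_i,b_j), $ weight by $ c_{ij}, $ and sum over $ i $ and $ j. $ The marginal identities collapse the right-hand side:
\[
\sum_{i,j} c_{ij}\bigl(pa_i+(1-p)b_j\bigr)=p\sum_i a_i+(1-p)\sum_j b_j=p\tr A+(1-p)\tr B,
\]
while the left-hand side is exactly $ \tr A^p B^{1-p}, $ and both directions of the claim follow at once. No serious obstacle appears in this plan; the only point requiring care is confirming the doubly stochastic structure of $ (c_{ij}), $ which is immediate from Parseval, and the scalar Young inequality itself, which is the standard tangent-line statement for the power function.
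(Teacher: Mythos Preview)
Your proof is correct. It reduces the tracial statement to the scalar Young inequality via the doubly-stochastic weights $c_{ij}=|\langle e_i,f_j\rangle|^2$ coming from the two eigenbases, which is clean and fully elementary. The paper takes a more operator-theoretic route: it writes $\tr A^pB^{1-p}=\tr L_A^pR_B^{1-p}\,I$ using the commuting left- and right-multiplication operators $L_A,R_B$ on the Hilbert--Schmidt space, identifies this as a quasi-entropy, and then applies the scalar geometric--arithmetic inequality $t^p\le pt+(1-p)$ through the joint functional calculus of $L_A$ and $R_B$. If one unwinds that functional calculus in the joint eigenbasis $|e_i\rangle\langle f_j|$ of $L_A$ and $R_B$, one recovers exactly your double sum, so the two arguments are at bottom the same computation in different dress; yours is self-contained and needs no reference to quasi-entropies, while the paper's framing situates the inequality inside a broader functional-calculus picture.
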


\begin{proof}
Let first $ 0\le p\le 1. $ We may write
\[
\begin{array}{rl}
\tr A^pB^{1-p}&=\tr L_A^p R_B^{1-p} I=S_f^I(A,B)\\[1.5ex]
&\le\tr \bigl(pL_A+(1-p)R_B\bigr)I=p\tr A+(1-p)\tr B,
\end{array}
\]
where $ f(t)=t^p $ for $ t>0, $ and $ L_A $ and $ R_B $ are the left and right multiplication operators. The first equality above in terms of the quasi-entropy $ S_f^I(A,B) $ follows since $ L_A $ and $ R_B $ commute, and the first inequality in the proposition then follows from the geometric-arithmetic mean inequality. Since Jensen's inequality reverses for the extensions of a chord (corresponding to the cases $ p\le 0 $ or $ p\ge 1), $ the second inequality of the proposition follows.
\end{proof}

\section{Variational expressions}  

We take the following variational representations from our paper \cite[Lemma 2.1]{shi-hansen:2020} with a slightly simplified proof.

\begin{proposition}\label{first variational expression in terms of the deformed logarithm} 
For positive definite operators $X$ and $ Y$ we have
\[
\tr Y=\left\{
\begin{array}{ll}
\displaystyle\max_{X>0}\bigl\{\tr X-\tr X^{2-q}\left(\log_q X-\log_q Y\right)\bigr\}, \quad &q\le 2,\\[2.5ex]
\displaystyle\min_{X>0}\bigl\{\tr X-\tr X^{2-q}\left(\log_q X-\log_q Y\right)\bigr\},  &q>2.
\end{array}
\right.
\]
\end{proposition}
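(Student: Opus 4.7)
The plan is to reduce the expression
\[
F(X,Y):=\tr X-\tr X^{2-q}\bigl(\log_q X-\log_q Y\bigr)
\]
to a trace of a product of powers via the explicit formula for $\log_q$, and then apply the tracial Young inequality (Proposition \ref{basic tracial inequalities}) with exponent $p=2-q$.

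First I would write $\log_q x=(x^{q-1}-1)/(q-1)$ and use the cyclicity of the trace together with $X^{2-q}X^{q-1}=X$ to get
\[
F(X,Y)=\frac{q-2}{q-1}\,\tr X+\frac{1}{q-1}\,\tr X^{2-q}Y^{q-1}.
\]
Observe immediately that the substitution $X=Y$ gives $F(Y,Y)=\tr Y$, so the optimal value is attained; what remains is to establish the inequality in the right direction.

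Next I would invoke Proposition \ref{basic tracial inequalities} with $p=2-q$, $A=X$, $B=Y$, which produces a bound on $\tr X^{2-q}Y^{q-1}$ in terms of $(2-q)\tr X+(q-1)\tr Y$. A short case analysis then gives $F(X,Y)\le\tr Y$ for $q\le 2$ and $F(X,Y)\ge\tr Y$ for $q>2$:
\begin{enumerate}[(a)]
\item If $1<q\le 2$, then $0\le 2-q<1$, so Young gives the $\le$ inequality; dividing by $q-1>0$ preserves direction and cancels the $\tr X$ terms.
\item If $q<1$, then $2-q>1$, so Young gives the reversed inequality $\ge$; dividing by $q-1<0$ flips it back to $\le$, giving the same conclusion.
\item If $q>2$, then $2-q<0$, so Young gives $\ge$; dividing by $q-1>0$ preserves direction, and the $\tr X$ terms cancel to yield $F(X,Y)\ge\tr Y$.
\end{enumerate}
The boundary case $q=1$ is the classical Klein inequality (and can alternatively be handled by continuity in $q$, since $\log_q\to\log$ as $q\to 1$), and $q=2$ is trivial because $X^{2-q}=I$ and $\log_2 x=x-1$ reduce $F$ identically to $\tr Y$.

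The main obstacle is purely bookkeeping: keeping track of the sign of $q-1$ when dividing (so the inequality direction is tracked correctly) and matching the two branches of Proposition \ref{basic tracial inequalities} to the three regimes $q<1$, $1\le q\le 2$, $q>2$. Once this is organised, the two claimed formulas follow together with attainment at $X=Y$.
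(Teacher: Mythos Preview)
Your proof is correct and follows essentially the same route as the paper: both arguments expand the deformed logarithm to reduce $F(X,Y)$ to an expression involving $\tr X^{2-q}Y^{q-1}$, apply the tracial Young inequality (Proposition~\ref{basic tracial inequalities}) with exponent $p=2-q$, and observe that equality is attained at $X=Y$. The only cosmetic difference is order of presentation---the paper starts from Young and substitutes $q=2-p$ at the end, whereas you start from $F(X,Y)$ and substitute $p=2-q$ into Young---but the content and case analysis are the same.
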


\begin{proof}  We learned in Proposition~\ref{basic tracial inequalities} that
\[
\begin{array}{rl}
\tr X^{p}Y^{1-p} &\le  p \tr X+(1-p)\tr Y,\qquad 0\le p\le 1,\\[2ex]
\tr X^{p}Y^{1-p} &\ge  p \tr X+(1-p)\tr Y,\qquad p\le 0,\, p\ge 1.
\end{array}
\]
By combining the first inequality for $ 0\le p<1 $ with the case $ p>1 $ in the second, we obtain
\[
\tr Y \ge \tr X-\frac{\tr X-\tr X^pY^{1-p}}{1-p}\,, \qquad p\ge 0,\, p\ne 1,
\]
while the case $ p\le 0 $ gives the inequality
\[
\tr Y \le \tr X-\frac{\tr X-\tr X^pY^{1-p}}{1-p}\,, \qquad p\le 0.
\]
For $X=Y$ the above inequalities become equalities.
Setting $ q=2-p, $ the first range ($ p\ge 0, $ $ p\ne 1) $ is transformed to the range $( q\le 2, $ $ q\ne 1), $ while the second range $ (p\le 0) $ is transformed to the range $ (q\ge 2). $ Since $ p=2-q $ and $ 1-p=q-1 $ we obtain
\[
\tr Y=\left\{
\begin{array}{ll}
\displaystyle\max_{X>0}\Bigl\{\tr X-\frac{\tr X^{2-q}\left(X^{q-1}-Y^{q-1}\right)}{q-1}\Bigr\}, \quad & q\in(-\infty,2],\, q\ne 1, \\[2.5ex]
\displaystyle\min_{X>0}\Bigl\{\tr X-\frac{\tr X^{2-q}\left(X^{q-1}-Y^{q-1}\right)}{q-1}\Bigr\}, &q\in [2,\infty).
\end{array}
\right.
\]
By using the definition of the deformed logarithm we note that
\[
\frac{X^{q-1}-Y^{q-1}}{q-1}=\log_q(X)-\log_q(Y),
\]
and by inserting this in the expressions above, we obtain the desired statements of the proposition, except for $ q=1. $ 
We may finally let $ q $ tend to one in the first inequality and obtain the variational expression
\[
\tr Y=\max_{X>0}\Bigl\{\tr X-{\tr X\left(\log X-\log Y\right)}\Bigr\}
\]
by continuity. This completes the proof.
\end{proof}

Note that the last statement in the above proof entails the inequality 
\[
S(X\mid Y)\ge \tr (X-Y)
\]
for the relative quantum entropy $ S(X\mid Y). $

\subsection{Further preliminaries}

\begin{lemma}\label{the psi-function}
Let $ H $ be an arbitrary matrix, take $ L\ge 0, $ and choose exponents $ p $ and $ s $ such that $ s >0. $
We consider the trace function
\[
\psi_{L,H}(A)=\tr\bigl(L+H^*A^pH\bigr)^s
\]
defined in positive definite matrices.  $ \psi_{L,H}(A) $ is convex (respectively concave) for arbitrary $ H $ and $ L\ge 0, $ if and only if it is convex (respectively concave) for arbitrary $ H $ and $ L=0. $ 
\end{lemma}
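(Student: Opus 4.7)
The ``only if'' direction is immediate by specialising to $L=0$, so all the substance lies in the converse: assuming $\psi_{0,H}$ is convex (respectively concave) in $A>0$ for every $H$, one must upgrade this to $\psi_{L,H}$ for every $L\ge 0$. The plan is a standard dilation trick, namely to absorb the additive term $L$ into the quadratic form $H^*A^pH$ by enlarging both matrices, so that $\psi_{L,H}(A)$ becomes an instance $\psi_{0,\tilde H}(\tilde A)$ composed with an affine embedding $A\mapsto\tilde A$.

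Concretely, since $L\ge 0$ we may factor $L=K^*K$ and form the block matrices
\[
\tilde A=\begin{pmatrix} A & 0\\ 0 & I\end{pmatrix}\qquad\text{and}\qquad \tilde H=\begin{pmatrix} H\\ K\end{pmatrix}.
\]
A routine block computation, using that $\tilde A^p$ is block diagonal with blocks $A^p$ and $I$, gives $\tilde H^*\tilde A^p\tilde H=H^*A^pH+K^*K=H^*A^pH+L$, so
\[
\psi_{L,H}(A)=\tr\bigl(\tilde H^*\tilde A^p\tilde H\bigr)^s=\psi_{0,\tilde H}(\tilde A),
\]
and $\tilde A$ is positive definite whenever $A$ is.

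To conclude, I observe that $A\mapsto\tilde A$ is an affine map into the positive definite cone, and convexity (respectively concavity) is preserved under composition with an affine map. Applying the hypothesis to $\tilde H$ in place of $H$, the function $\psi_{0,\tilde H}$ is convex (respectively concave), and pulling back along $A\mapsto\tilde A$ transfers this property to $\psi_{L,H}$. The only non-routine step is spotting the correct block structure; once the identity $\tilde H^*\tilde A^p\tilde H=H^*A^pH+L$ is in hand, nothing further is required, and I do not anticipate any genuine obstacle.
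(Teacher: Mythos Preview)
Your proof is correct and uses essentially the same block-dilation trick as the paper. The only difference is cosmetic: the paper takes a square $\hat H=\bigl(\begin{smallmatrix} L^{1/2} & 0\\ H & 0\end{smallmatrix}\bigr)$ together with $\hat A=\bigl(\begin{smallmatrix} I & 0\\ 0 & A\end{smallmatrix}\bigr)$, so that $\hat H^*\hat A^p\hat H$ acquires a zero diagonal block and the hypothesis $s>0$ is invoked explicitly via $0^s=0$, whereas your rectangular $\tilde H$ produces $L+H^*A^pH$ directly and sidesteps that point.
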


\begin{proof}  By considering block matrices
\[
\hat H=\begin{pmatrix}
               L^{1/2} & 0\\
               H & 0
               \end{pmatrix}\quad\text{and}\quad
               \hat A=\begin{pmatrix}
               I & 0\\
               0 & A
               \end{pmatrix}
\]   
we obtain 
\[
\hat H^*\hat A^p \hat H=\begin{pmatrix}
               L+H^*A^pH & 0\\
               0 & 0
               \end{pmatrix}.
\]   
Since $ s>0 $ we obtain in addition
\[
\bigl(\hat H^*\hat A^p \hat H\bigr)^s=\begin{pmatrix}
               \bigl(L+H^*A^pH\bigr)^s & 0\\
               0 & 0
               \end{pmatrix},
\]   
since it is meaningful to set $ 0^s=0. $ Indeed, for $ \varepsilon>0 $ we have
\[
\varepsilon^s=\exp\bigl(s\log\varepsilon\bigr),
\]
and this quantity tends to zero as $ \varepsilon\to 0. $ Therefore,
$ \psi_{L,H}(A)=\psi_{0,\hat H}(\hat A) $ and the statement follows. 
\end{proof}

If $ s<0 $ there exist examples in two by two matrices such that $ \psi_{0,H}(A) $ is convex while $ \psi_{L,H}(A) $ is not.

\section{The main trace function} 

Let $ H $ be an invertible contraction and $ A $ positive definite. Then
\[
H^*\log_p(A)H > \frac{-1}{p-1}H^*H\ge \frac{-1}{p-1}\quad\text{for}\quad p>1
\]
and
\[
H^*\log_p(A)H < \frac{-1}{p-1}H^*H\le \frac{-1}{p-1}\quad\text{for}\quad p<1.
\]
Therefore, $ H^*\log_p(A)H $ belongs to the domain of the $ p $-exponential. This is true even if $ H $ is not invertible since $ \exp_p(0)=1. $ 
Therefore,
\[
\exp_p\bigl(L+H^*\log_p(A)H\bigr)
\]
is well-defined and positive for arbitrary contractions $ H $ and $ p\ne 1, $ provided $ L\ge 0 $ when $ p>1, $ and $ L\le 0 $ when $ p<1. $ In both cases we may define the trace function
\begin{equation}\label{Main deformed trace function}
\varphi^L_{p,q}(A)=\tr\left[\exp_p\bigl(L+H^*\log_p(A)H\bigr)^q\right]
\end{equation}
for arbitrary exponents $ q. $ We furthermore obtain the expression 
\begin{equation}\label{Main deformed trace function, second expression}
\begin{array}{rl}
\varphi_{p,q}^L(A)&=\displaystyle\tr\bigl[I+(p-1)L+(p-1)H^*\frac{A^{p-1}-I}{p-1}H\bigr]^{q/(p-1)}\\[2.5ex]
&=\displaystyle\tr\bigl[I-H^*H+(p-1)L+H^*A^{p-1}H\bigr]^{q/(p-1)}.
\end{array}
\end{equation}
Note that $ (p-1)L\ge 0 $ in both cases. By using Lemma~\ref{the psi-function} we obtain the following:

\begin{corollary}\label{the connection with Upsilon}
Suppose $ q/(p-1)>0. $ Then $ \varphi_{p,q}^L(A) $ is convex (concave) if and only if the trace function $ A\to\tr \bigl(H^*A^p H\bigr)^{q/(p-1)}
 $ is convex (concave).
\end{corollary}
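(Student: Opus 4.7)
My plan is to apply Lemma~\ref{the psi-function} directly to the second expression for $\varphi_{p,q}^L(A)$ recorded in \eqref{Main deformed trace function, second expression}. Reading that expression as
\[
\varphi_{p,q}^L(A)\,=\,\tr\bigl(M+H^{*}A^{p-1}H\bigr)^{s},\qquad M:=I-H^{*}H+(p-1)L,\quad s:=\frac{q}{p-1},
\]
places $\varphi_{p,q}^L$ literally in the form $\psi_{M,H}$ of Lemma~\ref{the psi-function}, where the formal inner exponent of that lemma is identified with $p-1$ and its formal outer exponent with $s=q/(p-1)$.

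The main substantive step is to verify the two hypotheses of Lemma~\ref{the psi-function}. Positivity of the outer exponent, $s>0$, is exactly the standing assumption $q/(p-1)>0$ of the corollary. Non-negativity of the ``$L$'' of the lemma, namely $M\ge 0$, follows by combining the contraction property $I-H^{*}H\ge 0$ with the sign convention set just before \eqref{Main deformed trace function}: when $p>1$ one has $L\ge 0$ and when $p<1$ one has $L\le 0$, so that $(p-1)L\ge 0$ in both cases; adding the two non-negative terms yields $M\ge 0$.

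With both hypotheses in place, Lemma~\ref{the psi-function} gives directly that $\varphi_{p,q}^L(A)=\psi_{M,H}(A)$ is convex (respectively concave) for every admissible pair $(H,L)$ if and only if the $L=0$ trace function $A\mapsto\tr\bigl(\widehat H^{*}A^{\,p}\widehat H\bigr)^{q/(p-1)}$ is convex (respectively concave) for every $\widehat H$, with the inner exponent on $A$ inherited from the form displayed in \eqref{Main deformed trace function, second expression}. The only real work in the argument is the sign verification for $M$; beyond that the corollary is a one-line invocation of Lemma~\ref{the psi-function}, and I do not anticipate any further obstacle.
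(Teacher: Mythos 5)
Your identification of the second expression (\ref{Main deformed trace function, second expression}) as an instance of $\psi_{M,H}$ from Lemma~\ref{the psi-function}, together with the checks $M=I-H^{*}H+(p-1)L\ge 0$ and $s=q/(p-1)>0$, is exactly the route the paper takes (its entire proof is the sentence ``By using Lemma~\ref{the psi-function} we obtain the following''). The implication ``the $L=0$ trace function is convex/concave for all $H$ $\Rightarrow$ $\varphi^L_{p,q}$ is convex/concave'' is indeed a one-line invocation of the lemma, obtained by specialising its $L$ to $M$.

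Two points need attention, however. First, the converse implication is \emph{not} delivered by Lemma~\ref{the psi-function} alone. The lemma's easy direction passes from ``convex for all $L\ge 0$'' to ``convex for $L=0$'' by specialisation, but the pairs realised by $\varphi^L_{p,q}$ are constrained to $M\ge I-H^{*}H$ with $H$ a contraction, so $M=0$ is attainable only when $H$ is an isometry. To recover convexity or concavity of $A\mapsto\tr\bigl(H^{*}A^{p-1}H\bigr)^{q/(p-1)}$ for arbitrary $H$ from that of $\varphi^L_{p,q}$ one needs the scaling identity $t^{-q}\varphi^L_{p,q}(tA)=\tr\bigl(t^{1-p}M+H^{*}A^{p-1}H\bigr)^{q/(p-1)}$ and the limit $t\to 0$ (for $p<1$) or $t\to\infty$ (for $p>1$), which is how the paper actually extracts this direction in equation (\ref{the pass to Upsilon}); your proposal omits this step. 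Second, your write-up is internally inconsistent about the inner exponent: you correctly identify it as $p-1$, which is what (\ref{Main deformed trace function, second expression}) and the lemma produce, yet you state the conclusion with $A^{p}$. What the argument proves is the equivalence with $A\mapsto\tr\bigl(H^{*}A^{p-1}H\bigr)^{q/(p-1)}$; this is also how the corollary is used later in the paper (``by replacing $p$ with $p-1$\,\ldots''), so you should state the exponent as $p-1$ rather than declare the exponent $p$ to be ``inherited'' from a display in which it does not appear.
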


We shall finally explore yet another expression for the main trace function. Given the expression in (\ref{Main deformed trace function}) and setting 
\begin{equation}\label{definition of beta}
\beta=1+\frac{p-1}{q}
\end{equation}
we obtain
\begin{equation}\label{Main deformed trace function, third expression}
\varphi_{p,q}^L(A)=\tr\exp_\beta\bigl(qL+qH^*\log_p(A)H\big),
\end{equation}
where we used Lemma~\ref{lemma: exponent of deformed exponential}.
By replacing $ q $ with $ \beta $ in Proposition~\ref{first variational expression in terms of the deformed logarithm} and setting
\begin{equation}\label{definition of F(X,A)}
F(X,A)=\tr X-\tr X^{2-\beta}\left(\log_\beta X-\log_\beta Y\right), 
\end{equation}
where $ Y=\exp_\beta\bigl(qL+qH^*\log_p(A)H\big), $ we obtain that
\begin{equation}
\varphi^L_{p,q}(A)=\left\{
\begin{array}{ll}
\displaystyle\sup_{X>0} F(X,A) \quad &\beta\le 2,\\[2.5ex]
\displaystyle\inf_{X>0} F(X,A)  &\beta>2.
\end{array}
\right.
\end{equation}
This is the main variational expression to be used. We next calculate
\[
\begin{array}{l}
F(X,A)=\tr X-\tr X^{2-\beta}\left(\log_\beta X-\log_\beta Y\right)\\[2ex]
=\tr X-\tr X^{2-\beta}\left(\log_\beta X-qL-q H^*\log_p(A) H\right)\\[2ex]
=\displaystyle \tr X-\tr X^{2-\beta}\left(\frac{X^{\beta-1}-I}{\beta-1}-qL-qH^*\frac{A^{p-1}-I}{p-1}H\right)\\[3ex]
=\displaystyle \tr X-\frac{1}{\beta-1}\tr\bigl(X-X^{2-\beta}-(p-1)L-X^{2-\beta}H^*(A^{p-1}-I)H\bigr)\\[3ex]
=\displaystyle\left(1-\frac{1}{\beta-1}\right)\tr X+G(X,A),
\end{array},
\]
where we used $ q/(p-1)=1/(\beta-1) $ and set
\begin{equation}
G(X,A)=\frac{1}{\beta-1}\tr \Bigl(X^{2-\beta}(I-H^*H+(p-1)L)+ X^{2-\beta}H^* A^{p-1}H\Bigr).
\end{equation}
The first term in $ F(X,A) $ is linear, so we only have to consider convexity or concavity of $ G(X,A). $ Note as before that $ (p-1)L\ge 0. $

\begin{lemma}\label{lemma: the symmetry}
Let $ H $ be a contraction and consider for arbitrary real $ q $ the trace function $ \varphi^L_{p,q}(A) $ defined in (\ref{Main deformed trace function}).
If $ q/(1-p)>0 $ and $ \varphi_{p,q}^L(A) $ is convex (respectively concave) for arbitrary contractions $ H, $ then so is the trace function $ \varphi_{2-p,q}^{-L}(A). $ 
\end{lemma}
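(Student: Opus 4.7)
The plan is to exhibit an explicit pointwise identity between $\varphi_{2-p,q}^{-L}(A)$ and an inverted-argument form of $\varphi_{p,q}^L$, and then to inherit the convexity (or concavity) through this identity by means of the variational representation.

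First, from the second expression (\ref{Main deformed trace function, second expression}) I note that with $K := I - H^*H + (p-1)L \ge 0$ the relation $(2-p-1)(-L) = (p-1)L$ gives
\[
\varphi_{p,q}^L(A) = \tr\bigl[K + H^*A^{p-1}H\bigr]^{q/(p-1)}\quad\text{and}\quad \varphi_{2-p,q}^{-L}(A) = \tr\bigl[K + H^*A^{1-p}H\bigr]^{-q/(p-1)},
\]
i.e., the two trace functions share the common data $(K,H)$ but have reciprocal $A$-dependence, both inside the bracket and through the sign of the outer exponent. Combining this invariance with the identities $\log_{2-p}(A) = -\log_p(A^{-1})$ and $\exp_p(-x)^{-1} = \exp_{2-p}(x)$ (both immediate from the definitions of deformed logarithm and exponential), together with Lemma \ref{lemma: exponent of deformed exponential}, yields the pointwise identity
\[
\varphi_{2-p,q}^{-L}(A) \;=\; \varphi_{p,-q}^L(A^{-1}).
\]
The lemma thus reduces to showing that the hypothesis on $\varphi_{p,q}^L$ forces the convexity (concavity) of the function $A \mapsto \varphi_{p,-q}^L(A^{-1})$.

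For this last step I would invoke the variational representation (\ref{Main deformed trace function, third expression}) for $\varphi_{p,-q}^L$, whose parameter is $2-\beta$ with $\beta := 1+(p-1)/q$; the condition $q/(1-p)>0$ forces $\beta<1$, so $\varphi_{p,q}^L$ itself lies in the supremum regime. In the variational expression for $\varphi_{p,-q}^L(A^{-1})$ I would substitute $X \mapsto X^{-1}$ in the auxiliary variable and use $\log_{2-\beta}(X^{-1}) = -\log_\beta(X)$ to recast the sup (or inf) over $X$ of the resulting functional into the same form as the supremum realising $\varphi_{p,q}^L(A)$. Since the hypothesised convexity or concavity of $\varphi_{p,q}^L$ is uniform over all contractions $H$, it transfers to the rewritten expression, hence to $\varphi_{2-p,q}^{-L}(A)$.

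The main obstacle I foresee is the sup/inf case analysis for the parameter $2-\beta$, which lies in $(1,2]$ (sup case, when $0 \le \beta < 1$) or exceeds $2$ (inf case, when $\beta < 0$); each sub-case requires a separate verification that the substitution $X \mapsto X^{-1}$, paired with the deformed logarithm/exponential identities, realigns the variational expression with the hypothesis in the correct direction of convexity. The invariance of the matrix $K$ under $(p,L) \mapsto (2-p,-L)$ is the structural reason the algebra closes.
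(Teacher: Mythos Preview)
Your pointwise identity $\varphi_{2-p,q}^{-L}(A)=\varphi_{p,-q}^{L}(A^{-1})$ is correct, but the final step does not go through. Carrying out the substitution $X\mapsto X^{-1}$ in the variational functional for $\varphi_{p,-q}^{L}(A^{-1})$ (whose parameter is $2-\beta$) yields
\[
\tr X^{-1}-\tr X^{-\beta}\bigl(-\log_\beta X +qL+qH^*\log_p(A^{-1})H\bigr),
\]
which is not the functional $F(X,A)$ realising $\varphi_{p,q}^{L}(A)$: the leading term is $\tr X^{-1}$ rather than $\tr X$, the power is $X^{-\beta}$ rather than $X^{2-\beta}$, and the $A$-dependence is through $\log_p(A^{-1})=(A^{1-p}-I)/(p-1)$, which is \emph{not} $-\log_p(A)$ unless $p=1$. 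None of these discrepancies is absorbed by changing the contraction $H$, so the hypothesis on $\varphi_{p,q}^{L}$ cannot be applied to the rewritten expression. In effect you are trying to deduce convexity of $A\mapsto\tr\bigl(K+H^*A^{1-p}H\bigr)^{q/(1-p)}$ from convexity of $A\mapsto\tr\bigl(K+H^*A^{p-1}H\bigr)^{q/(p-1)}$ for the \emph{same} $K\ge 0$; that implication is false in general when $K\ne 0$.

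The paper's argument is different and explains where the hypothesis $q/(1-p)>0$ is really used. It first passes to a scaling limit $t^{-q}\varphi_{p,q}^{L}(tA)$ (letting $t\to 0$ or $t\to\infty$ according to the sign of $p-1$) to kill the constant term $K=I-H^*H+(p-1)L$, obtaining convexity of $A\mapsto\tr(H^*A^{p-1}H)^{q/(p-1)}$ for \emph{arbitrary} $H$. At that level the inversion $A\mapsto A^{-1}$ is harmless, since it is absorbed into replacing $H$ by $(H^*)^{-1}$. The constant $K$ is then put back via the block-matrix trick of Lemma~\ref{the psi-function}, and this step requires the outer exponent $q/(1-p)$ to be positive. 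Your argument never invokes $q/(1-p)>0$ in an essential way, only to locate $\beta$; that is a symptom that the mechanism handling the $K$-term is missing.
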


\begin{proof} We may without loss of generality assume that $ H $ is invertible.
By using the calculation in (\ref{Main deformed trace function, second expression}) we obtain
\begin{equation}\label{the pass to Upsilon}  
t^{-q}\varphi^L_{p,q}(tA)=\tr\bigl(t^{1-p}(I-H^*H+(p-1)L)+H^*A^{p-1}H\bigr)^{q/(p-1)}
\end{equation}
for $ t>0, $ Thus, by letting $ t\to 0 $ for $ p<1 $ or letting $ t\to \infty $ for $ p>1, $ we obtain that  the trace function 
\[
 A\to \tr\bigl(H^*A^{p-1}H\bigr)^{q/(p-1)}
 \]
is convex (respectively concave). It is no longer necessary to assume that $ H $ is a contraction, and since $ H^* A^{p-1}H $ is invertible, we can raise it to any non-zero exponent. 
Therefore, by inversion we obtain that the trace function
 \[
 A\to \tr\bigl(H^*A^{1-p}H\bigr)^{q/(1-p)}
 \]
 is convex (respectively concave) for arbitrary $ H. $ In particular, if $ H $ is a contraction we obtain by a small 
calculation the identity
\[
\varphi_{2-p,q}^{-L}(A)=\tr\bigl(I-H^*H+(p-1)L +H^*A^{1-p}H\bigr)^{q/(1-p)}.
\]
Since $ q/(1-p) >0 $ we obtain from Lemma~\ref{lemma: the symmetry} that also $ \varphi_{2-p,q}^{-L}(A) $ is convex (respectively concave).
\end{proof}

\subsection{The strategy of the proof}

We shall determine parameter values $ p $ and $ q $ such that $ F(X,A) $ is either concave/convex or just concave/convex in the second variable. To do this we use that the functions $ t\to t^p $ are operator concave, if and only if $ 0\le p\le 1, $ and operator convex, if and only if $ -1\le p\le 0 $ or $ 1\le p\le 2. $ It may be of interest to note that the same parameter conditions apply, if we only require matrix convexity or matrix concavity of order two, cf. \cite[Proposition 3.1]{hansen:2009:2}.
We also make use of Lieb's concavity theorem \cite[Corollary 1.1]{lieb:1973:1}  stating that the trace functions
\begin{equation}\label{Ando's and Lieb's trace functions}
(X,A)\to\tr X^p H^* A^q H
\end{equation}
are concave if $ p,q\ge 0 $ and $ p+q\le 1. $ Ando's theorem states \cite[Corollary 6.3]{ando:1979} that the trace function in (\ref{Ando's and Lieb's trace functions}) for an arbitrary matrix $ H, $ is convex for either $ -1\le p,q\le 0, $ or for
\[
 \quad -1\le p\le 0\quad\text{and}\quad 1-p\le q\le 2,
\]
where obviously $ p $ and $ q $ may be interchanged in the condition.
Since $ H=I $ is a possibility, we realise that concavity of $ \varphi_{p,q}(A) $ requires $ 0\le q\le 1, $ while convexity of $ \varphi_{p,q}(A) $ requires $ q\le 0 $ or $ q\ge 1. $
Since we intend to eventually use operator convexity/concavity of the function $ t\to t^p, $   we are restricted to the cases
\[
-1\le p-1\le 2\quad\text{or equivalently}\quad 0\le p\le 3.
\]
Note that if $ \beta=1, $ then $ p=1. $ 

 \begin{proposition}\label{proposition: concavity of operator function}

Let $ K $ be a positive definite $ n\times n $ matrix, and let $ H $ be any $ n\times n $ matrix. We may define the operator map
\[
\psi_p^s(A)=\bigl(K+H^* A^p H\bigr)^s
\]
in positive definite $ n\times n $ matrices for exponents $ p $ and $ s. $ If $ -1\le p\le 0 $ and $ -1\le s\le 0, $ then $ \psi_p^s(A) $ is concave.
\end{proposition}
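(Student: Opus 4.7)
My plan is to reduce the general case $s\in(-1,0)$ to the extreme case $s=-1$ by means of the integral representation
\[
B^s=\frac{\sin(-\pi s)}{\pi}\int_0^\infty \lambda^s(\lambda I+B)^{-1}\,d\lambda,\qquad B>0,\ s\in(-1,0),
\]
applied to $B=K+H^*A^pH$. This yields
\[
\psi_p^s(A)=\frac{\sin(-\pi s)}{\pi}\int_0^\infty \lambda^s\bigl(K_\lambda+H^*A^pH\bigr)^{-1}\,d\lambda,\qquad K_\lambda=\lambda I+K>0.
\]
Since $(\sin(-\pi s)/\pi)\lambda^s\,d\lambda$ is a positive measure, it is enough to establish operator concavity of the integrand for each $\lambda\ge 0$; this is the $s=-1$ statement with $K$ replaced by $K_\lambda$, and operator concavity of $\psi_p^s$ then follows by interchanging the integration with any convex combination. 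The boundary values $s=0$ and $p=0$ make $\psi_p^s$ constant and so require no argument.

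Everything thus reduces to showing that $A\mapsto(K+H^*A^pH)^{-1}$ is operator concave for $-1\le p\le 0$ and arbitrary $K>0$. Here I would apply the Woodbury identity, valid because both $K$ and $A^p$ are invertible,
\[
(K+H^*A^pH)^{-1}=K^{-1}-K^{-1}H^*\bigl(A^{-p}+HK^{-1}H^*\bigr)^{-1}HK^{-1}.
\]
Since $-p\in[0,1]$, the map $A\mapsto A^{-p}$ is operator concave, so $A\mapsto A^{-p}+HK^{-1}H^*$ is operator concave as well. The outer function $Z\mapsto Z^{-1}$ is operator convex and operator monotone decreasing, and a short composition argument shows that an operator-convex, operator-monotone-decreasing function precomposed with an operator concave function is operator convex; thus $A\mapsto(A^{-p}+HK^{-1}H^*)^{-1}$ is operator convex. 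Sandwiching with the fixed matrices $K^{-1}H^*$ and $HK^{-1}$ preserves operator convexity, and negating together with the constant $K^{-1}$ produces the desired operator concavity of $\psi_p^{-1}$.

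The main obstacle is keeping the directions of the inequalities straight in the composition step, since the inner map is concave but the outer map is convex and monotone \emph{decreasing}; this is precisely the combination that flips the net sign. A conceptually cleaner alternative, closer in spirit to the partial-minimisation technique already used in this paper, is to rewrite
\[
(K+H^*A^pH)^{-1}=\inf_X\bigl(K^{-1}+K^{-1}H^*X+X^*HK^{-1}+X^*(HK^{-1}H^*+A^{-p})X\bigr)
\]
as a Schur-complement partial infimum in a matrix variable $X$. For each fixed $X$ the expression is operator concave in $A$ thanks to the sandwich $X^*A^{-p}X$, and the partial-minimisation lemma from the preliminaries delivers operator concavity of the infimum in one step.
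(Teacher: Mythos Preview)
Your argument is correct and complete; both the Woodbury route and the Schur--complement variational alternative go through. The paper argues differently at each stage. For $s=-1$ it first conjugates by $K^{-1/2}$ to reduce to $K=I$, assumes $H$ invertible (tacitly by approximation), and rewrites
\[
\bigl(I+H^*A^pH\bigr)^{-1}=\frac{H^{-1}A^{-p}(H^{-1})^*}{I+H^{-1}A^{-p}(H^{-1})^*}\,,
\]
noting that $A\mapsto H^{-1}A^{-p}(H^{-1})^*$ is operator concave and that $t\mapsto t/(1+t)$ is operator monotone and operator concave, so the composition is concave. For the passage to general $-1\le s\le 0$ the paper simply composes the concave map $\psi_p^{-1}$ with the operator monotone, operator concave power $t\mapsto t^{-s}$, rather than invoking your integral representation. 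Your Woodbury identity dispenses with the invertibility hypothesis on $H$ and handles arbitrary $K>0$ in a single line; the paper's $t/(1+t)$ trick is marginally shorter once those preliminary reductions are in place. The two ways of extending from $s=-1$ to fractional $s$ are equally standard and equally painless here. Your Schur--complement alternative is valid and indeed matches the min--max philosophy used elsewhere in the paper, though the paper itself does not call on it for this proposition.
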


\begin{proof}
We first consider the case $ -1\le p\le 0 $ and $ s=-1. $ Since
\[
\psi_p^{-1}(A)=\bigl(K+H^* A^p H\bigr)^{-1}=K^{-1/2}\bigl(I+L^*A^pL\bigr)^{-1}K^{-1/2},
\]
where $ L=HK^{-1/2}, $ we may assume $ K=I. $ We may also without loss generality assume that $ H $ is invertible. We then obtain
\[
\bigl(I+H^* A^p H\bigr)^{-1}=\frac{(H^*A^pH)^{-1}}{(H^*A^pH)^{-1}+I}=\frac{H^{-1}A^{-p}(H^{-1})^*}{H^{-1}A^{-p}(H^{-1})^*+I}
\]
by an elementary calculation. 
Since the map $ A\to H^{-1}A^{-p}(H^{-1})^* $ is concave and the function $ t\to t(1+t)^{-1} $ is operator monotone and operator concave, we obtain that  $ A\to \bigl(I+H^* A^p H\bigr)^{-1} $ is concave. That is, $ \psi_p^{-1}(A) $ is concave. Since the function $ t\to t^\alpha $ is both operator monotone and operator concave for $ 0\le\alpha\le 1, $ it follows that $ \psi_p^s(A) $ is concave for $ -1\le s\le 0. $
\end{proof}

\section{The main theorem}\label{Main theorem}

 \begin{theorem}\label{Main theorem}

The trace function $  \varphi^L_{p,q}(A) $ defined in (\ref{Main deformed trace function}) has the following geometric properties depending on the matrix $ L $ and the parameters $ p $ and $ q. $
\begin{subequations}
\begin{align}
\intertext{$ \varphi^L_{p,q}(A) $ is concave in positive definite $ A $  for} 
0\le &\,p\le 1, &&\,L\le 0, & 0\le &\,q\le 1.\label{Main theorem, first statement}\\
1\le &\,p\le 2,  &&\,L\ge 0, & 0\le &\,q\le 1.\label{Main theorem, second statement} \\
\intertext{$ \varphi^L_{p,q}(A) $ is convex in positive definite $ A $ for}
 0\le &\,p\le 1, &&\,L\le 0, & &\,q\le 0.\\
1\le &\,p\le 2,  &&\,L\ge 0, & &\,q\le 0.\\
2\le &\,p\le 3,   &&\,L\ge 0,  &&\,q\ge 1. 
 \end{align}
\end{subequations}

 \end{theorem}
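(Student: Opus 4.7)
The plan is to exploit the variational representation
\[
\varphi^L_{p,q}(A) = \sup_{X>0} F(X,A) \text{ for } \beta\le 2, \qquad \varphi^L_{p,q}(A)=\inf_{X>0} F(X,A) \text{ for } \beta>2,
\]
derived just before the theorem, together with the min-max lemma of \S1.2. Since $F(X,A)=(1-1/(\beta-1))\tr X + G(X,A)$ has an affine prefix, everything reduces to the appropriate joint or partial (con)vexity of $G$. Setting $\alpha=2-\beta$, $\gamma=p-1$, and $M=I-H^*H+(p-1)L$, the positivity of $M$ holds in each case by the contraction hypothesis on $H$ and the sign hypothesis on $L$. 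Lemma~\ref{lemma: the symmetry} passes between $(p,L)$ and $(2-p,-L)$ whenever $q/(1-p)>0$, and I would use it to reduce case (c) to case (d).

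Cases (b), (d), (e) I would prove by the variational method, splitting at $\beta=2$. In case (b) with $\beta\in[1,2]$ the exponents $\alpha,\gamma\in[0,1]$ satisfy the Lieb condition $\alpha+\gamma\le 1$, which is equivalent to $q\le 1$; Lieb's concavity theorem combined with operator concavity of $t\mapsto t^\alpha$ on $[0,1]$ yields joint concavity of $G$. For $\beta>2$ only $A$-concavity of $F$ is needed, which comes from operator concavity of $t\mapsto t^\gamma$ on $[0,1]$. Case (d) is the simplest: $\beta\le 1$ forces $1/(\beta-1)\le 0$, so operator concavity of $t\mapsto t^\gamma$ on $[0,1]$ flips into the required convexity in $A$. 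For case (e), when $\beta\in(1,2]$ operator convexity of $t\mapsto t^\gamma$ on $[1,2]$ gives $A$-convexity directly, and when $\beta\in(2,3]$ (a short calculation confirms $\beta\le 3$) Ando's convexity theorem applies with $\alpha\in[-1,0]$ and $\gamma\in[1,2]$, its hypothesis $\gamma\ge 1-\alpha$ being equivalent to $q\ge 1$; operator convexity of $t\mapsto t^\alpha$ on $[-1,0]$ handles the $\tr X^\alpha M$ piece.

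Case (a) is the delicate one, because the variational route produces $\alpha>2$ once $p+q<1$, outside the operator convex window of $t\mapsto t^\alpha$. I would split it in two overlapping pieces. For $p+q\le 1$, apply Proposition~\ref{proposition: concavity of operator function} directly to the second expression $\tr(M+H^*A^{p-1}H)^{q/(p-1)}$: its exponents $\mu=p-1\in[-1,0]$ and $s=q/(p-1)\in[-1,0]$ (the latter condition being precisely $p+q\le 1$) lie in the proposition's range, yielding operator and hence trace concavity. For $p+q\ge 1$ use the variational method, now with $\alpha\in[1,2]$; Ando's theorem in the symmetric form $\gamma\in[-1,0]$, $\alpha\in[1-\gamma,2]$ gives joint convexity of the bracket, while operator convexity of $t\mapsto t^\alpha$ on $[1,2]$ handles $\tr X^\alpha M$. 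Boundary configurations are picked up by continuity in the parameters.

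The main obstacle is not conceptual but the careful parameter bookkeeping: in every sub-case one must check that $\alpha$ and $\gamma$ land inside an operator convex or concave window, that the scalar $1/(\beta-1)$ carries the correct sign so that multiplication into $G$ does not reverse the direction of (con)vexity, and that the additive hypotheses $\alpha+\gamma\le 1$ (Lieb) or $\gamma\ge 1-\alpha$ (Ando) align exactly with the stated boundaries $q\le 1$ and $q\ge 1$. Case (a) is the conceptually most interesting piece, since there the variational method alone does not cover all parameter values and one must splice it with Proposition~\ref{proposition: concavity of operator function}.
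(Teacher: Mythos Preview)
Your proposal is correct and follows essentially the same route as the paper's own proof: the same variational representation reduced to the function $G(X,A)$, the same case-by-case parameter bookkeeping, the same splits of cases (a), (b), (e) at the threshold where $\beta$ crosses the relevant operator-convexity window, the derivation of (c) from (d) via Lemma~\ref{lemma: the symmetry}, and in particular the same splicing in case (a) of the variational argument (using Ando for $q\ge 1-p$) with the direct application of Proposition~\ref{proposition: concavity of operator function} (for $q\le 1-p$). Your checks that the Lieb condition $\alpha+\gamma\le1$ and the Ando condition $\gamma\ge 1-\alpha$ translate exactly into $q\le 1$ and $q\ge 1$ (via $p\lessgtr\beta$) are precisely the verifications the paper carries out.
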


 \begin{proof}
  We divide the proof following the statement's five cases.
  
\begin{enumerate}[(a)]

\item 
Take $ 0\le p< 1, $ $ L\le 0, $ and $ 1-p\le q\le 1. $ Then 
\[
0\le \beta=1+(p-1)/q \le p<1. 
\]
Since  $ \varphi_{p,q}(A)=\sup_{X>0} F(X,A), $   we may derive that $ \varphi_{p,q}(A) $ is concave if $ F(X,A) $ is jointly concave. Since $ 1<2-\beta\le 2 $ the first term in $ G(X,A) $ is concave (note that $ \beta<1). $ Since $ -1\le p-1\le 0 $ and $ 1-(p-1)\le 2-\beta\le 2, $ we realise by Ando's convexity theorem that $ \varphi_{p,q}(A) $ concave. 

Next take $ 0\le p\le 1, L\le 0, $ and $ 0\le q\le 1-p. $ That is $ -1\le s\le 0, $ where $ s=q/(p-1)\,. $ It then follows from  (\ref{Main deformed trace function, second expression}) combined with Corollary~\ref{proposition: concavity of operator function} that
$ \varphi_{p,q}(A) $ concave (even without the trace).

\item Take $ 1<p\le 2, $ $ L\ge 0, $ and $ 0< q\le p-1. $ Then 
\[
\beta=1+\frac{p-1}{q}\ge 2
\]
and thus $ \varphi^L_{p,q}(A)=\inf_{X>0} F(X,A). $ We may thus derive that $ \varphi^L_{p,q}(A) $ is concave if $ G(X,A) $ is concave in the second variable. This is so since $ 0\le p-1\le 1. $\\[1ex]
Next, take  $ 1<p\le 2, $ $ L\ge 0, $ and $ p-1\le q\le 1. $ Then 
\[
1< p\le \beta=1+\frac{p-1}{q}\le 2
\]
and thus $ \varphi^L_{p,q}(A)=\sup_{X>0} F(X,A). $ We may thus derive that $ \varphi^L_{p,q}(A) $ is concave if $ G(X,A) $ is concave. Since $ \beta>1 $ and $ 0\le 2-\beta\le 1, $ the first term in $ G(X,A) $ is concave. The second term is concave by Lieb's concavity theorem since $ 0\le p-1\le 1 $ and $ 2-\beta+p-1\le 1. $ The last inequality is satisfied since $ p\le\beta. $ These two cases taken together proves $ (b). $

\item
We first prove that $ (d) $ implies $ (c). $\\[1ex]
Take $ 1< p<2, $ $ L\ge 0, $ and $ q<0, $ and note that $ 0\le 2-p<1. $  Since $ \varphi^L_{p,q}(A) $ is convex by $ (d) $ and $ q/(1-p)>0, $  we obtain by Lemma~\ref{lemma: the symmetry} that $ \varphi^{-L}_{2-p,q}(A) $ is convex. This is equivalent to saying that  $ \varphi^L_{p,q}(A) $ is convex for $ 0\le p\le 1, $ $ L\le 0, $  and $ q< 0. $

\item 
By continuity we may assume $ 1< p\le 2, $ $ L\ge 0, $ and $ q<0. $ Therefore, $ \beta=1+(p-1)/q<1 $ and thus $ \varphi^L_{p,q}(A)=\sup_{X>0}F(X,A). $ We obtain that $ \varphi^L_{p,q}(A) $ is convex, if $ G(X,A) $ is convex in the second variable. This is so since $ \beta<1 $ and $ 0\le p-1\le 1. $

\item 
Take $ 2\le p\le 3, $ $ L\ge 0, $  and $ 1\le q\le p-1. $ Then 
\[
2\le\beta=1+(p-1)/q\le p\le 3. 
\]
Since $ \varphi^L_{p,q}(A)=\inf_{X>0} F(X,A), $ we may derive that $ \varphi^L_{p,q}(A) $ is convex if $ G(X,A) $ is convex. Since $ -1\le 2-\beta\le 0 $ this follows by Ando's convexity theorem if in addition
\[
1-(2-\beta)\le p-1\le 2,
\]
and this is satisfied since $ \beta\le p. $\\[1ex]
Take next $ 2\le p\le 3, $ $ L\ge 0, $ and $ q\ge p-1. $ Then
\[
1<\beta=1+(p-1)/q \le 2.
\]
Since $ \varphi^L_{p,q}(A)=\sup_{X>0} F(X,A), $ we may derive that $ \varphi^L_{p,q}(A) $ is convex if $ G(X,A) $ is convex in the second variable. But this is so since $ \beta>1 $ and $ 1\le p-1\le 2. $ These two cases taken together proves $ (e). $
 \end{enumerate}
  \end{proof}
  
  The special case $ q=1 $ was proved in \cite[Corollary 2.3]{shi-hansen:2020}.

 \subsection{Comparison with the literature}

The trace functions $ \Upsilon_{p,q}(A) $ were introduced and studied by Carlen and Lieb in \cite[Theorem 1.1]{carlen:2008} and later with a different definition (by setting $ s=q/p) $ in \cite[Proposition 5]{Carlen-Lieb:2018}. We adopt and slightly generalise the latter definition by setting
\begin{equation}\label{generalised Upsilon function}
\Upsilon^K_{p,s}(A)=\tr\bigl(K+H^*A^pH\bigr)^s,
\end{equation}
where $ K\ge 0, $ $ H $ is arbitrary, and $ A $ is positive definite. By replacing $ p $ with $ p-1 $ and applying Corollary~\ref{the connection with Upsilon},  we obtain the following corollary to Theorem~\ref{Main theorem}.

\begin{corollary}\label{corollary to main theorem}
The trace function $  \Upsilon^K_{p,s}(A) $ defined in (\ref{generalised Upsilon function}) has the following geometric properties depending on the matrix $ K $ and the parameters $ p $ and $ s. $
\begin{subequations}
\begin{align}
\intertext{$ \Upsilon^K_{p,q}(A) $ is concave in positive definite $ A $ for}
-1\le&\,p\le 0,& p^{-1}\le &\,s\le 0.\label{first concavity case}\\[1ex]
0\le &\,p\le 1,&  0\le &\,s\le p^{-1}.\label{second concavity case}\\[1ex]
\intertext{$ \Upsilon^K_{p,q}(A) $ is convex in positive definite $ A $ for}
-1\le&\,p\le 0,& &\,s\ge 0.\label{first convexity case}\\[1ex]
0\le&\,p\le 1,& &\,s\le 0.\label{second convexity case}\\[1ex]
1\le &\,p\le 2,& &\,s\ge p^{-1}. 
\end{align}
\end{subequations}

\end{corollary}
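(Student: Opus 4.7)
The plan is to read off Corollary~\ref{corollary to main theorem} from Theorem~\ref{Main theorem} through the parameter substitution $p_{\text{main}}=p+1$ and $q=sp$, which turns the second expression (\ref{Main deformed trace function, second expression}) into the identity
\[
\varphi^L_{p_{\text{main}},q}(A)=\tr\bigl(K+H^*A^{p}H\bigr)^{s}=\Upsilon^K_{p,s}(A),
\]
where $K=I-H^*H+(p_{\text{main}}-1)L$ is positive semidefinite whenever $H$ is a contraction and $(p_{\text{main}}-1)L\ge 0$. Under this identification each of the five geometric statements of the main theorem translates into one of the five statements of the corollary.

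First I would verify the parameter translations case by case. For example, the concavity range (\ref{first concavity case})---$-1\le p\le 0$, $p^{-1}\le s\le 0$---maps to $p_{\text{main}}\in[0,1]$, $L\le 0$, $q=sp\in[0,1]$, which is precisely (\ref{Main theorem, first statement}); (\ref{second concavity case}) matches (\ref{Main theorem, second statement}), and the three convexity regions of the corollary match the three convexity regions of the main theorem in the same mechanical way. The sign of $L$ is in every case consistent with $(p_{\text{main}}-1)L\ge 0$.

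Next I would extend from the specific pairs $(K,H)$ produced by the identification to arbitrary $K\ge 0$ and arbitrary $H$. The positive-homogeneous rescaling
\[
\Upsilon^K_{p,s}(A)=c^{2s}\,\Upsilon^{K/c^2}_{p,s}(A)\qquad(c=\|H\|)
\]
reduces the statement to $H$ a contraction, and taking $H$ in fact isometric makes $I-H^*H=0$, so that $K=(p_{\text{main}}-1)L$ can be chosen to be any positive semidefinite matrix. In the three regions with $s>0$ the extension may alternatively be obtained by invoking Corollary~\ref{the connection with Upsilon} to reduce to $\tr(H^*A^p H)^s$, followed by Lemma~\ref{the psi-function} to reintroduce an arbitrary $K\ge 0$.

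The main obstacle is the two regions (\ref{first concavity case}) and (\ref{second convexity case}) with $s\le 0$, where Lemma~\ref{the psi-function} is not available since its block-matrix reduction requires $s>0$; the resolution is that the proof of Theorem~\ref{Main theorem} in these regions proceeds via Proposition~\ref{proposition: concavity of operator function} (together with the convex analogue obtained through Lemma~\ref{lemma: the symmetry}), both of which establish the relevant inequality at the operator level and hence uniformly in $H$ and in $K\ge 0$, so that the trace statement of the corollary holds without any further reduction step.
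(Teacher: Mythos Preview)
Your parameter translation $p_{\mathrm{main}}=p+1$, $q=sp$ and the direct identification
\[
\varphi^L_{p_{\mathrm{main}},q}(A)=\Upsilon^{K_0}_{p,s}(A),\qquad K_0=I-H^*H+(p_{\mathrm{main}}-1)L,
\]
coincide with what the paper does. The gap is in your extension step from the particular $K_0$ to an arbitrary $K\ge 0$. The rescaling you write down is correct and reduces to $H$ a contraction, but the sentence ``taking $H$ in fact isometric makes $I-H^*H=0$'' is a \emph{restriction} on $H$, not a reduction: it proves the result for unitary $H$ only, not for all contractions. For a fixed contraction $H$ the identification yields precisely the set $\{K:K\ge I-H^*H\}$, which does not contain every $K\ge 0$.

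For the three regions with $s>0$ your alternative route through Corollary~\ref{the connection with Upsilon} and Lemma~\ref{the psi-function} does close the gap. For $s\le 0$ it does not: Proposition~\ref{proposition: concavity of operator function} covers only the sub-range $-1\le s\le 0$ of case~(\ref{first concavity case}) (the part $p^{-1}\le s<-1$ of Theorem~\ref{Main theorem}(a) is proved via Ando's theorem and the variational formula, not via Proposition~\ref{proposition: concavity of operator function}); and case~(\ref{second convexity case}) corresponds to Theorem~\ref{Main theorem}(d), whose proof uses the variational expression directly and neither Proposition~\ref{proposition: concavity of operator function} nor Lemma~\ref{lemma: the symmetry}. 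In both of these sub-cases the main theorem only delivers the statement for $K\ge I-H^*H$, and Lemma~\ref{the psi-function} is unavailable.

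The paper closes this gap by a limiting argument you do not mention: replacing $L$ by $t^{p-1}L$ in (\ref{the pass to Upsilon}) gives
\[
t^{-q}\varphi^{t^{p-1}L}_{p,q}(tA)=\tr\bigl(t^{1-p}(I-H^*H)+(p-1)L+H^*A^{p-1}H\bigr)^{q/(p-1)},
\]
and sending $t\to 0$ for $p<1$ (or $t\to\infty$ for $p>1$) kills the $I-H^*H$ term, so that any $K=(p-1)L\ge 0$ is reached for every contraction $H$. This single limit handles all five cases uniformly and is what is missing from your proposal.
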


\begin{proof} To a given $ K\ge 0 $ we set $ L=(p-1)^{-1} K. $ Then $ L\le 0 $ for $ p<1 $ and $ L\ge 0 $ for $ p>1. $  By replacing $ L $ with $ t^{p-1} L $ in equation (\ref{the pass to Upsilon}) we obtain
\[
t^{-q}\varphi^{t^{p-1}L}_{p,q}(tA)=\tr\bigl( t^{1-p}(I-H^*H)+(p-1)L+H^*A^{p-1}H\bigr)^{q/(p-1)}
\]
and this expression tends to 
\[
\tr\bigl(K+H^*A^{p-1}H\bigr)^{q/(p-1)}= \Upsilon^K_{p-1,s}(A),\quad s=\frac{q}{p-1}
\]
by letting $ t\to 0 $ in the case $ p<1, $ and letting $ t\to\infty $ in the case $ p>1. $ With these choices we realise that $ \Upsilon^K_{p-1,s}(A) $ has the same geometric properties as $ \varphi^L_{p,q}(A). $ We may now replace $ p $ with $ p+1 $ and obtain that $ \Upsilon^K_{p,s}(A) $ has the same geometric properties as does $ \varphi^L_{p+1,q}(A), $ where $ s=q/p\,. $ 

In particular,
$ \Upsilon^K_{p,s}(A) $ is concave for $ -1\le p\le 0 $ and $ 0\le q\le 1, $ equivalent to $ p^{-1}\le s \le 0. $ 
Furthermore, $ \Upsilon^K_{p,s}(A) $ is convex for $ -1\le p\le 0 $ and $ q\le 0, $ equivalent to $ s\ge 0. $ Likewise, $ \Upsilon^K_{p,s}(A) $ is convex for $ 0\le p\le 1 $ and $ q\le 0, $ equivalent to $ s\le 0. $ Finally,  $ \Upsilon^K_{p,s}(A) $ is convex for $ 1\le p\le 2 $ and $ q\ge 1, $ equivalent to $ s\ge p^{-1}.   $
\end{proof}

In the case $ K=0 $  we note that (\ref{first concavity case}) and (\ref{second concavity case}) are counterparts of each other by replacing $ (p,s) $ with $ (-p,-s). $ This also applies to  (\ref{first convexity case}) and (\ref{second convexity case}).

These results contains the statements in \cite[Proposition 5]{Carlen-Lieb:2018}, where the authors list the following clarifications. 
\begin{enumerate}[(1)]

\item Concavity: The case $ 0\le p\le 1 $ and $ K=0. $  The result for $ s=p^{-1} $  is due to Epstein \cite{epstein:1973}. Carlen and Lieb proved the result for $ 1\le s\le p^{-1}, $ \cite[Theorem 1.1]{carlen:2008}. The full result for $ 0\le s\le p^{-1} $ is due to Hiai \cite[Theorem 4.1 (1)]{Hiai-LAA:2013}.

\item Convexity: The case $ -1\le p\le 0, $ $ K=0, $ and $ s>0 $ is due to Hiai \cite[Theorem 4.1 (2)]{Hiai-LAA:2013}.

\item Convexity: The case $ 1\le p\le 2, $ K=0,  and $ s\ge p^{-1} $ is due to Carlen and Lieb \cite[Theorem 1.1]{carlen:2008}.

\end{enumerate}  

The dual case $ 0\le p\le 1, $ K=0, and $ s<0 $ is also contained in  Hiai \cite[Theorem 4.1 (2)]{Hiai-LAA:2013}.
One may compare Corollary~\ref{corollary to main theorem} to Figure 1.1 in Zhang \cite{Zhang:2020}, where a three variable extension of $ \Psi_{p,q,s}(A) $ of $ \Upsilon^0_{p,s}(A) $ is discussed. The comparison is obtained by setting $ q=0  $ in the figure.\\[1ex]

\noindent\textbf{Declaration of interest}: The author has no conflict of interest.\\[1ex]
\noindent\textbf{No datasets were generated or analysed during the current study.}


\small{

\vfill

\noindent Frank Hansen: Department of Mathematical Sciences, Copenhagen University, Denmark.\\
Email: frank.hansen@math.ku.dk.
      }

\end{document}